\documentclass[12pt]{article} 

\usepackage{theorem}\usepackage{amsmath,amssymb, bbm}
\usepackage{amsfonts,graphicx,verbatim,psfrag}\usepackage{graphics}
\usepackage{verbatim, hyperref}\usepackage{rotating}\usepackage{epsfig}\usepackage{color}

\usepackage{cleveref}
\usepackage{bm}

\topmargin = -0.2in
\oddsidemargin = 0.07in \evensidemargin = 0.07in \textwidth = 6.5in
\textheight = 9in

\usepackage{natbib}
\bibliographystyle{apalike}



\newcommand{\be}{\begin{equation}} \newcommand{\ee}{\end{equation}}
\newcommand{\bea}{\begin{eqnarray}} \newcommand{\eea}{\end{eqnarray}}
\newcommand{\beas}{\begin{eqnarray*}} \newcommand{\eeas}{\end{eqnarray*}}
\newcommand{\bit}{\begin{itemize}}
\newcommand{\eit}{\end{itemize}}
\newcommand{\ben}{\begin{enumerate}}
\newcommand{\een}{\end{enumerate}}

\newcommand{\ba}{\begin{array}} \newcommand{\ea}{\end{array}}

\newcommand{\lbar}[1]{\mbox{\b{$#1$}}}

\newcommand{\mc}{\mathcal}

\newcommand{\noi}{\noindent}

\newcommand{\Bthe}{\begin{theorem}}
\newcommand{\Ethe}{\end{theorem}}

\newcommand{\R}{\mathbbm{R}}

{\theoremstyle{plain}} \theoremheaderfont{\scshape}

\newtheorem{theorem}{Theorem}
\newtheorem{corollary}{Corollary}

\newtheorem{proposition}{Proposition}
\newtheorem{lemma}{Lemma}

\newtheorem{assumption}{Assumption}
\newtheorem{definition}{Definition}

\newenvironment{proof}{\noi {\it Proof.} }

\begin{document}

\title{Liability Design with Information Acquisition}
\date{\today}
\author{Francisco Poggi and Bruno Strulovici \\ Northwestern University\thanks{Emails: fpoggi@u.northwestern.edu and b-strulovici@northwestern.edu. Strulovici gratefully acknowledges financial support from the National Science Foundation (Grant No.1151410).}}

\setlength{\parindent}{0pt}
\setlength{\parskip}{.3cm}

\maketitle

\begin{abstract} How to guarantee that firms perform due diligence before launching potentially dangerous products? We study the design of liability rules when (i) limited liability prevents firms from internalizing the full damage they may cause, (ii) penalties are paid only if damage occurs, regardless of the product's inherent riskiness, (iii) firms have private information about their products' riskiness before performing due diligence. We show that (i) any liability mechanism can be implemented by a {\em tariff} that depends only on the evidence acquired by the firm if a damage occurs, not on any initial report by the firm about its private information, (ii) firms that assign a higher prior to product riskiness always perform more due diligence but less than is socially optimal, and (iii) under a simple and intuitive condition, any type-specific launch thresholds can be implemented by a monotonic tariff.
\end{abstract}

\section{Introduction}

In 2019, a California court sentenced paint maker Sherwin-Williams to pay hundreds of millions of dollars to address the dangers caused by lead paint. The sentence was remarkable because even though lead paint became banned in 1978, the suit concerned damage caused during the decades {\em before} the ban and centered on the accusation that paint makers were aware of the dangers caused by lead paint long before the ban was passed.

In essence, the court's argument was that Sherwin-Williams and other paint makers knew or should have known the dangers caused by lead paint.

While it is difficult for a regulator to guess a firm’s private information, it is perhaps easier to assess due diligence: did paint makers research the risks of lead paint sufficiently well before marketing it?

Formally, the problem is not just one of private information, but also one of information acquisition: how can a regulator make sure that agents learn sufficiently well before taking actions?

One may model this question as a delegated Wald problem (\cite{Wald1945}): the principal is a regulator who relies on an agent (the firm) to acquire information before deciding between launching a product and abandoning it.

If the regulator could unrestrictedly penalize a firm, she could force the firm to internalize any damage caused by the product and implement the socially-optimal level of information acquisition.

For various reasons, liability may be capped, however, which precludes the full transfer of damages to the firm. Moreover, the regulator may punish the firm only if some damage occurs, and choose a penalty that depends only on the information available to the regulator after the damage has occurred.

We analyze this problem in a Brownian version of the Wald Problem: the firm observes an arithmetic Brownian whose drift depends on the state of the world, i.e., on the riskiness of the product. Information acquisition is costly. The first-best policy is to acquire information until the riskiness of the product becomes sufficiently clear, launch the product if this riskiness is low, and abandon it if the riskiness is high.

We characterize all incentive-compatible liability rules when (i) the firm has initial private information, (ii) liability is capped, and the (iii) regulator can penalize the firm only when damage occurs.  In general, the regulator may wish to propose at the outset a menu of contracts to the firm in order to extract some of the firm's private information. Indeed, this is the approach suggested by the Revelation Principle. In the present context, however, this approach may be difficult to implement, because it requires that the firm contracts with the regulator long before launching the product and, in fact, even before knowing whether it wishes to launch the product.

Fortunately, our first main result is that it is without loss generality for the regulator to focus on {\em tariff mechanisms}, which are mechanisms for which the firm does not report its private information and only pays a penalty if damage occurs. This result may be viewed as a Taxation Principle for situations in which transfers take place only after some contingencies (damage occurs), but not others, and builds on our companion paper (\cite{PoggiStrulovici2020a}), which provides a general Taxation Principle with Non-Contractible Events.

With a tariff mechanism, a firm's decision to launch the product depends on its prior information, which affects the probability that the product causes damage. Our second main result is that any incentive-compatible tariff mechanism has the following property: firms whose initial private information assigns a higher probability of damage always acquire more evidence before launching their product. This monotonicity property is not an immediate consequence of incentive compatibility, and would in fact be violated if the regulator could impose evidence-based transfers to the firm regardless of whether a damage occurred.

Our third main result is to show that any launch thresholds that induce the firm to perform more due diligence that it would under a fixed penalty can be implemented by a monotonic tariff, i.e., a tariff whose penalty is decreasing in the strength of evidence acquired by the firm before launching the product.

We also show that for a general specification of the regulator's objective function, setting the tariff at its uniform ceiling induces to little due diligence compared to the social optimum, even when the social benefit from launching the product exceeds the firm's profit from doing so. This result holds under a cost-benefit ratio condition, which stipulates that the social benefit from the product relative to the harm it may cause is smaller that the firm's profit relative to the maximum liability that it may face.

\section{Model}
\label{sec:baseline_model}

A firm must decide between launching a product and abandoning its development. If launched, the product may cause damage with positive probability. The firm has some private information about the product's riskiness and can acquire additional information (``due diligence''), before making a final decision.

A regulator wishes to encourage the launch of low-risk products and deter the launch of high-risk ones, as well as to encourage the firm to acquire sufficient information before making its decision.

The regulator faces two constraints. First, the firm has limited liability: the social cost caused by product damage is $L > 0$ and the firm's liability is capped at some lower level $l <L$. Second, the regulator can penalize the firm only if damage occurs. In particular, it cannot penalize firms that acquired too little information and took an overly risky decision unless such risk results in actual damage.

The timing of the game is as follows:
\bit
\vspace{-.3cm}
\item[] 1. The firm is endowed with a prior $\theta\in \Theta \subset [0,1]$ about the product's riskiness $y\in \{0,1\}$, with $\theta = \Pr(y= 1)$.
\item[] 2. The firm can acquire additional information about $y$ according to a dynamic technology to be described shortly.
\item[] 3. The firm decides between launching and abandoning the product.
\item[] 4. If the firm launches the product, it causes some damage if the product was risky $(y=1)$ and doesn't if the product was safe $(y=0)$.
\item[] 5. In case of damage, the firm pays a penalty $\psi \leq l$ set by the regulator.
\eit
The assumption that a risky product causes damage with probability 1 is without loss of generality: if this probability were less than~1, the same analysis would apply using expected damage and expected penalties.

{\bf Information structure:} During the information-acquisition stage, the firms observes a process $X$ given by
\[X_t = (-1 + 2 y) t + \sigma B_t\]
where $B$ is the standard Brownian motion. The drift of $X$ depend symmetrically on the product's riskiness~$y$: the drift is $+1$ if the product causes damage and $-1$ if it does not. Therefore, observing $X$ gradually reveals $y$. This revelation is progressive due to the stochastic component of $X$.

The firm stops acquiring information at some time $\tau$ that is adapted to the filtration of $X$.

The regulator observes nothing about $X$ except if some damage occurs, in which case she observes the last value $X_\tau$ taken by the process at the time of the firm's decision. $X_\tau$ is a measure of the firm's due diligence to assess the product's riskiness before launching it: in this Brownian model, it is well-known (though not immediate) that for each $t>0$, the variable $X_t$ is a sufficient statistic for the information about $y$ contained by the entire path $\{X_s\}_{s\leq t}$ of the process $X$ until time $t$. Mathematically, the likelihood ratio of $y$ associated with a path of $X$ from time 0 to $t$ is only a function of $X_t$.

Because the stopping time $\tau$ is chosen endogenously by the firm, which has private information about $y$, $X_\tau$ is not a sufficient statistic for $y$ once the firm's strategic timing is taken into account. Our assumption that the regulator observes $X_\tau$ instead of the entire path $\{X_t\}_{t\leq \tau}$ captures the idea that the regulator does not perfectly observe all the decisions made by the firm during the information acquisition stage. Intuitively, the regulator observes the most informative signal about $y$ contained by the path of $X$ that is independent of the firm's private information.

{\bf Payoffs:} The firm incurs a running cost $c$ from acquiring information, and a profit $\pi$ if it launches the product.
Let $d = 1$ if the firm launches the product and $d= 0$ if it abandons it, and $\tau$ denote the time spent acquiring information.
The firm's realized payoff is
\[u = d(\pi - y \psi)  - c\tau\]
where $\pi$ is the firm's profit from the launch in the absence of damage.
The regulator's objective internalizes the entire damage caused by the product:
\[v= d(\beta - y L)  - c\tau\]
where $\beta$ is the social benefit from the launch in the absence of damage.

Throughout the paper, we make the following assumption:
\begin{assumption}[Ordered Cost-Benefit Ratios]\label{A-ranked-ratios} $l/\pi < L /\beta$.\end{assumption}
This assumption captures the idea that the risk of damage is more severe for the regulator relative to the benefit of launching the product than it is for the firm. The assumption allows the social benefit from launching the product to exceed the firm's profit (i.e., $\beta >\pi$).

\section{Preliminary Analysis: Symmetric Information}

{\bf First Best:} If the regulator knew the firm's type $\theta$ and could dictate the firm's strategy, the optimal strategy would consist in launching the product if the process $X$ drops below some lower threshold $x^*_\theta$ and abandoning it if $X$ exceeds some upper threshold $\bar x^*_\theta\geq x^*_\theta$.

{\bf Tariffs:} A tariff is a function $\psi: \R \rightarrow \R$ mapping evidence $x$ to a penalty $\psi(x)\leq l$.\footnote{We allow negative tariffs, which amount to a subsidy for the firm and may be used to reward firms that performed unusually careful inspections before launching their products.} Given a tariff $\psi$, a firm with prior $\theta$ chooses a stopping time $\tau$ and a launch/abandonment decision $d\in \{0,1\}$ to maximize its expected utility
\be\label{eq-incentives} E\left[d(\pi - y \psi(X_\tau)) - c\tau \;| \; \theta\right].\ee
It is straightforward to check that the solution to this problem consists of cutoffs $\lbar x^\psi_\theta < \bar x^\psi_\theta$ such that the firm acquires information until $X$ reaches either of the cutoffs.

Limited liability affects incentives in two ways. First, since the firm does not fully internalize damages, it is willing to take riskier decisions than is socially optimal for a given belief about the product's safety. Second, the value of information is different. For example, if the tariff is $\psi \equiv 0$, the firm has no incentive to acquire any information and always launches its product immediately.

To appreciate the consequences of limited liability, suppose that the regulator sets the tariff uniformly equal to the allowed maximum: $\psi(x) \equiv l$ for all $x\in \R$. In this case, the firm launches the product if $X$ drops below some cutoff $\lbar x^l_\theta$ and abandons it if $X$ reaches some upper cutoff $\bar x^l_\theta$.

This maximum penalty may motivate the firm to perform due diligence before launching the product, but the amount of due diligence is always strictly suboptimal, as the next result shows.
\begin{proposition}[recklessness]\label{pro-under} $\lbar x^*_\theta < \lbar x^l_\theta$ for all $\theta\in \Theta$.\end{proposition}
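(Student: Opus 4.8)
The plan is to recast both the first-best and the maximum-penalty problems as optimal stopping problems for the posterior belief and to reduce the claim to a monotone comparison of their launch thresholds in belief space. Writing $p_t = \Pr(y=1\mid \mc F_t)$ for the posterior, the sufficient-statistic property gives $\log\frac{p_t}{1-p_t} = \log\frac{\theta}{1-\theta} + \frac{2}{\sigma^2}X_t$, so $p_t$ is a strictly increasing function of $X_t$ in which the prior $\theta$ enters only through an additive constant. Consequently each launch threshold satisfies $\lbar x_\theta = \frac{\sigma^2}{2}\big(\mathrm{logit}(\lbar p) - \mathrm{logit}(\theta)\big)$ for a belief-space threshold $\lbar p$ that does not depend on $\theta$, and the difference $\lbar x^l_\theta - \lbar x^*_\theta = \frac{\sigma^2}{2}\big(\mathrm{logit}(\lbar p^{l}) - \mathrm{logit}(\lbar p^{*})\big)$ is independent of $\theta$. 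Thus it suffices to prove $\lbar p^{*} < \lbar p^{l}$, i.e. that the first best launches only at strictly safer posteriors. By standard filtering the belief solves $dp_t = \frac{2}{\sigma}p_t(1-p_t)\,d\hat B_t$ under the decision maker's filtration, and each problem is an optimal stopping problem with launch payoff $b - \lambda p$ and abandonment payoff $0$, where $(b,\lambda)=(\beta,L)$ for the first best and $(b,\lambda)=(\pi,l)$ for the firm.

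Next I would normalize. Dividing each objective by its damage parameter $\lambda>0$ leaves the stopping rule unchanged and turns the problem into one with unit damage, benefit $\rho = b/\lambda$, and cost $\gamma = c/\lambda$; write $\lbar p(\rho,\gamma)$ for the resulting launch threshold. The two problems then correspond to $(\rho^{*},\gamma^{*}) = (\beta/L,\,c/L)$ and $(\rho^{l},\gamma^{l}) = (\pi/l,\,c/l)$. Assumption~\ref{A-ranked-ratios} gives exactly $\rho^{*} = \beta/L < \pi/l = \rho^{l}$, while $L>l$ gives $\gamma^{*} = c/L < c/l = \gamma^{l}$. The point of the normalization is that, although the raw changes in benefit and damage from the firm to the first best may pull the threshold in opposite directions when $\beta>\pi$, after normalization both parameters move in the same direction.

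The crux, and where I expect the main obstacle to lie, is the comparative statics of the free boundary $\lbar p(\rho,\gamma)$; I would show it is increasing in each argument. For the cost I would prove the continuation regions are nested: if it is optimal to keep acquiring information at cost $\gamma_2$, then running the \emph{same} stopping rule at a lower cost $\gamma_1<\gamma_2$ yields a strictly higher value (the terminal payoff is unchanged while the cost term $\gamma E[\tau]$ falls), so continuation remains optimal; hence $C_{\gamma_2}\subseteq C_{\gamma_1}$ and $\lbar p$ is increasing in $\gamma$. For the benefit I would use the bounds $V_{\rho_1}(p)\le V_{\rho_2}(p)\le V_{\rho_1}(p)+(\rho_2-\rho_1)$ for $\rho_1<\rho_2$ (adding $\rho_2-\rho_1$ to the launch payoff raises any strategy's value by at most $\rho_2-\rho_1$); wherever launching is optimal under $\rho_1$ these bounds force equality and make launching optimal under $\rho_2$, so $\lbar p$ is increasing in $\rho$.

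Combining the two monotonicities, $\lbar p^{*} = \lbar p(\rho^{*},\gamma^{*}) \le \lbar p(\rho^{l},\gamma^{*}) \le \lbar p(\rho^{l},\gamma^{l}) = \lbar p^{l}$. To upgrade to the strict inequality asserted—the delicate part—I would use the strictness of Assumption~\ref{A-ranked-ratios} together with the strict convexity of the value function on the (nondegenerate) continuation region: there $V$ solves $V''(p) = \tfrac{\gamma\sigma^2}{2}\,p^{-2}(1-p)^{-2}>0$ with smooth pasting $V(\lbar p)=\rho-\lbar p$, $V'(\lbar p)=-1$, which pins $\lbar p$ down as a strictly monotone function of $\rho$; alternatively one solves this ODE explicitly with its four boundary conditions and differentiates. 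Either route yields $\lbar p^{*}<\lbar p^{l}$, and translating back through the increasing map $p\mapsto x$ gives $\lbar x^{*}_\theta < \lbar x^{l}_\theta$ for every $\theta\in\Theta$.
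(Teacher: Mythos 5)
Your proposal is correct, and its skeleton is the same as the paper's: both reduce the claim to two monotone comparative statics of the launch threshold, one with respect to a constant shift of the launch payoff and one with respect to the flow cost of information. Your normalization by the damage parameter, which turns the regulator's and firm's problems into $(\rho,\gamma)=(\beta/L,\,c/L)$ and $(\pi/l,\,c/l)$, is algebraically the same decomposition as the paper's identity $v=\frac{L}{l}(u-k)$ with $k=\pi-\beta l/L>0$: your monotonicity in $\rho$ is the paper's Observation 2 (subtracting a constant from the launch payoff lowers the threshold), and your monotonicity in $\gamma$ is its Observation 1 (scaling the payoff by $\alpha>1$ is equivalent to a cost reduction, which enlarges the continuation region). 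What differs is the execution, and it cuts both ways. You work in posterior space, which requires the filtering facts ($\mathrm{logit}(p_t)=\mathrm{logit}(\theta)+2X_t/\sigma^2$ and prior-independence of the belief thresholds) that the paper sidesteps entirely by fixing $\theta$ and arguing directly in $x$-space; in exchange, the logit map gives you the stronger conclusion that $\lbar x^l_\theta-\lbar x^*_\theta$ is the same constant for every $\theta$, plus access to the explicit ODE for the value function. Your proof of the benefit-shift monotonicity (the sandwich $V_{\rho_1}\le V_{\rho_2}\le V_{\rho_1}+(\rho_2-\rho_1)$ forcing nested launch regions) is different from, and arguably cleaner than, the paper's argument via the hitting-probability inequality $f(x)<1$. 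Finally, you are more careful than the paper on strictness: as proved, the paper's two Observations deliver only weak rankings (Observation 1 concludes ``$\lbar x(\hat u)\le\lbar x(u)$'', and Observation 2's strict domination is established only on the open continuation interval), even though the proposition asserts a strict inequality, whereas you explicitly flag the needed upgrade and indicate how smooth pasting or the explicit solution of $V''=\frac{\gamma\sigma^2}{2}p^{-2}(1-p)^{-2}$ delivers it.
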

\begin{proof} We fix some prior $\theta \in\Theta$ throughout the proof and let $x^*$ and $x^l$ denote the socially-optimal and firm-optimal launch thresholds, respectively, when $\psi \equiv l$, given prior $\theta$.

Given a current evidence level $x$, the firm's expected payoff if it launches the product at $x$ is:
\[u(x) = \pi - p(x) l\]
where $p(x) = \Pr(y = 1 | x, \theta)$. The regulator's expected payoff if the firm stops at $x$ is:
\[v(x) = \beta - p(x) L.\]
Assumption~\ref{A-ranked-ratios} implies that
\be\label{eq-decomposition} v(x) = \frac{L}{l}(u(x) - k)\ee
where $k = \pi - \beta l/L > 0$.

Thus, the ``launch-payoff functions'' faced by the regulator and the firm are related by equation~\eqref{eq-decomposition}, and both parties face a running cost $c$ before launching or abandoning the product and a payoff normalized to zero if the product is abandoned. Proposition~\ref{pro-under} then follows from two observations:

{\bf Observation 1:} Consider two launch-payoff functions $\hat u,u$. If $\hat u = \alpha u$ with $\alpha >1$, then the optimal launch threshold for $\hat u$ is lower than the optimal launch threshold for $u$.

{\bf Observation 2:} Consider two launch-payoff functions $\hat u,u$. If $\hat u = u - \hat k$ with $\hat k >0$, the optimal launch threshold for $\hat u$ is lower than the optimal launch threshold for $u$.

Once we justify these observations, Proposition~\ref{pro-under} follows from~\eqref{eq-decomposition} by applying Observation 2 to $u-k$ and $u$ and Observation 1 to $v =L/l (u-k)$ and $u-k$, using the fact that $L/l  >1$.

To prove Observation 1, notice that if $\hat u = \alpha u$ with $\alpha >1$, the dynamic optimization problem with launch payoff $\hat u$ and running cost $c$ is equivalent to the problem with launch payoff $u$ and running cost $\hat c = c/\alpha < c$, since the problems become identical up to the scaling factor $\alpha$. With a lower running cost $\hat c$, the continuation interval $(\lbar x(\hat u), \bar x(\hat u))$ contains the continuation interval $(\lbar x(u), \bar x(u))$ with running cost $c$. In particular, the launch thresholds are ranked: $\lbar x(\hat u) \leq \lbar x(u)$.

To prove Observation 2, consider the optimal continuation interval $(x^l, \bar x)$ when the launch-payoff function is $u$ and let $\tau = \inf\{t: X_t\notin (x^l,\bar x)\}$. Fixing any $x\in (x^l, \bar x)$, acquiring information is optimal when starting at $x$, which means that
\be\label{eq-u} u(x) \leq f(x) u(x^l) - c E_x[\tau]\ee
where $f(x)$ is the probability that $X_\tau = x^l$ (as opposed to $\bar x$) and $E_x[\tau]$ is the expected value of $\tau$ when the process $X$ starts at $x$. For the launch-payoff function $\hat u = u - k$ with $k >0$,~\eqref{eq-u} implies that
\[\hat u (x) < f(x) \hat u (x^l) - c E_x[\tau].\]
This shows that stopping at $x$ to launch the product is strictly dominated by the strategy that consists in launching the product if $X$ reaches $x^l$ and abandoning it $X$ reaches $\bar x$. This implies that the optimal launch threshold with $\hat u$ is lower than $x^l$ and proves Observation~2.\hfill$\blacksquare$\end{proof}
Intuitively, Proposition~\ref{pro-under} captures the idea that the regulator values more than the firm having a safer product conditional on launch. Remarkably, however, this result holds even when the social benefit from launching the product exceeds the firm's profit from doing so.

Although the uniform tariff $\psi \equiv l$ brings the firm closest to fully internalizing the damage that its product might cause, the regulator might choose a different tariff, for example, to reward the firm if it acquired more information. The next section studies the firms' incentives in more details.

\section{Incentive Compatibility}

Suppose that the regulator can contract with the firm after the firm has received its initial private information and before it takes any action, and that the regulator has full commitment power.
\begin{definition} A \emph{direct liability mechanism} is a menu $M = (\{\tau_\theta, d_{\theta}, \psi_\theta\}_{\theta\in \Theta})$ such that for all $\theta\in \Theta$:
\bit
\vspace{-.3cm}
\item[(i)] The stopping time $\tau_\theta$ is measurable with respect to the filtration $\{\mc F^X_t\}_{t\geq 0}$ generated by $X$;
\item[(ii)] The decision $d_\theta$ is measurable with respect to the information at time $\tau$, i.e., to the $\sigma$-algebra $\mc F^X_{\tau_\theta}$;
\item[(iii)] The tariff $\psi_\theta: \R \rightarrow \R$ is uniformly bounded above by $l$.
\eit
\end{definition}
Since the regulator has full commitment power, the Revelation Principle guarantees that it is without loss of generality to focus on direct liability mechanisms.

Given a direct liability mechanism, the firm chooses an item $f_{\hat \theta} = (\tau_{\hat \theta}, d_{\hat \theta}, \psi_{\hat \theta})$ from the menu. Faced with the tariff $\psi  = \psi_{\hat \theta}$, the firm chooses a stopping time and a decision to maximizes its expected utility as given by~\eqref{eq-incentives}.
\begin{definition} A direct liability mechanism $M$ is {\em incentive compatible} if for each $\theta\in \Theta$ it is optimal to chooses the item $f_{\theta}$ from $M$ and the strategy $(\tau_\theta, d_\theta)$.\end{definition}
In general, a direct liability mechanism may implement absurd policies: for example, the firm could get a very high reward (i.e., a negative penalty) if it launches the product when $X_t$ is very high (and, hence, the product is very risky). We rule out such a possibility and focus on {\em admissible} mechanisms:
\begin{definition} An IC direct liability mechanism is {\em admissible} if each type $\theta$'s strategy is characterized by thresholds $\lbar x_\theta \leq \bar x_\theta$ such that $\theta$ launches the product if $X_t$ drops below $\lbar x_\theta$ and abandons it if $X_t$ exceeds $\bar x_\theta$.\end{definition}
In practice, it may be difficult for a regulator to contract with the firm at the outset and agree on penalties that depend finely on a firm's private information before it launches a product and, even earlier, before the firm decides how much due diligence to perform before deciding whether to launch its product. It is therefore valuable to determine when a direct liability mechanism can be implemented by a tariff that is independent of the firm's private information.
\begin{definition} A direct liability mechanism is a {\em tariff mechanism} if the tariffs $\{\psi_\theta\}_{\theta\in \Theta}$ are independent of $\theta$.\end{definition}
\begin{theorem}\label{thm:posted_liability} Any admissible direct liability mechanism is outcome-equivalent to a tariff mechanism.\end{theorem}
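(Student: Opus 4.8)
The plan is to produce, from an admissible IC direct mechanism $M=(\{\tau_\theta,d_\theta,\psi_\theta\}_{\theta\in\Theta})$, a single tariff $\psi$ independent of $\theta$ under which each type's optimal response reproduces its original outcome. First I would record which data actually matter. Because $X$ has continuous paths and type $\theta$'s designated strategy is the threshold rule $(\lbar x_\theta,\bar x_\theta)$, on the equilibrium path the firm launches exactly at $X_\tau=\lbar x_\theta$ and abandons exactly at $X_\tau=\bar x_\theta$. Hence the law of $(\tau_\theta,d_\theta)$, the launch probability, and the expected penalty $\Pr(\mathrm{launch},y=1\mid\theta)\,\psi_\theta(\lbar x_\theta)$ depend on $\psi_\theta$ only through the single number $P_\theta:=\psi_\theta(\lbar x_\theta)$. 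Outcome-equivalence therefore reduces to building a tariff under which type $\theta$ still optimally chooses $(\lbar x_\theta,\bar x_\theta)$ and pays $P_\theta$ upon damage.

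The construction I would use is the pointwise infimum $\psi(x):=\inf_{\theta\in\Theta}\psi_\theta(x)$, which satisfies $\psi\le l$ automatically. The facts I need come from incentive compatibility via a single deviation. Fix $\theta$ and any $\theta'$, and let type $\theta$ take item $\theta'$ but play its own threshold rule $\sigma_\theta=(\lbar x_\theta,\bar x_\theta)$: the stopping time and decision are unchanged, only the launch penalty becomes $\psi_{\theta'}(\lbar x_\theta)$, so this yields $U_\theta+\Pr(\mathrm{launch},y=1\mid\theta)\,(P_\theta-\psi_{\theta'}(\lbar x_\theta))$, where $U_\theta$ is type $\theta$'s equilibrium value under the objective \eqref{eq-incentives}. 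IC forces this to be $\le U_\theta$, and whenever $\Pr(\mathrm{launch},y=1\mid\theta)>0$ (the non-degenerate case) I obtain $\psi_\theta(\lbar x_\theta)\le \psi_{\theta'}(\lbar x_\theta)$ for every $\theta'$. Consequently $\psi(\lbar x_\theta)=\inf_{\theta'}\psi_{\theta'}(\lbar x_\theta)=P_\theta$, so under $\psi$ the strategy $\sigma_\theta$ delivers exactly $U_\theta$ and the correct expected transfer.

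It then remains to show $\sigma_\theta$ is optimal against $\psi$. For any alternative threshold strategy $\tilde\sigma=(\xi,\eta)$ with launch point $\xi$, the firm launches only at $\xi$ and pays $\psi(\xi)$; choosing a type $\theta^\dagger$ with $\psi_{\theta^\dagger}(\xi)=\psi(\xi)$ (or a sequence approaching the infimum), the payoff of $\tilde\sigma$ under $\psi$ equals its payoff under $\psi_{\theta^\dagger}$, which by IC is at most type $\theta$'s best response to item $\theta^\dagger$, hence at most $U_\theta$. Therefore no threshold deviation beats $\sigma_\theta$, and combined with the previous paragraph, $\sigma_\theta$ is optimal and outcome-preserving.

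The crux, and the step I expect to be the main obstacle, is upgrading this from threshold deviations to \emph{all} stopping strategies. The matching-item trick works because a threshold rule launches at a single deterministic point $\xi$, so the infimum tariff there coincides with one genuine item $\psi_{\theta^\dagger}$; a path-dependent rule could in principle launch at different points on different paths and thereby exploit the pointwise infimum, stopping precisely where $\psi$ dips low, in a way no single item permits. Closing this gap requires showing that the firm's best response to the constructed tariff is itself of threshold form, i.e., that the launch reward $\pi-p(x)\psi(x)$ retains enough structure (a single-crossing property in $x$) to guarantee an interval continuation region. This is exactly the role of the general Taxation Principle with Non-Contractible Events from the companion paper, which I would either invoke directly or verify the hypotheses of here; the remaining routine edge cases (infima not attained, types that never launch with damage so that $\psi(\lbar x_\theta)$ is irrelevant, and the fact that IC keeps $\psi$ bounded below wherever launching with damage has positive probability) are dispatched by the same limiting arguments.
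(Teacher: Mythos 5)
Your proposal shares the paper's skeleton---observe that only the on-path penalties $P_\theta=\psi_\theta(\lbar x_\theta)$ matter, use IC to make them consistent across items, assemble a single type-independent tariff, and kill threshold deviations by matching them with misreporting deviations that IC already rules out---but your construction is genuinely different. You take the pointwise \emph{infimum} $\psi(x)=\inf_{\theta'}\psi_{\theta'}(x)$, whereas the paper goes the opposite way: it sets $\psi(x)=l$ (the ceiling) everywhere \emph{except} on the designated threshold set $\{\lbar x_\theta\}$, where it equals the designated penalties. Your IC inequality $\psi_{\theta'}(\lbar x_\theta)\geq \psi_\theta(\lbar x_\theta)$ for all $\theta'$ is correct and is in fact a strengthening of the paper's corresponding step (the paper only needs that types sharing a threshold share a penalty); it is what forces your infimum to agree with $P_\theta$ on path. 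The trade-off runs in the paper's favor, though: the ceiling construction uses only the numbers $\psi_\theta(\lbar x_\theta)$ and discards all off-path values of the items, so it avoids the technical baggage your infimum carries (measurability of an infimum over arbitrarily many unstructured functions, and the pointwise lower bounds you must extract from IC to keep $\psi$ finite); and because every non-designated launch point then carries the maximal penalty $l\geq\psi_\theta(x)$, every off-path deviation becomes weakly \emph{less} attractive than it already was under the firm's own item, so the final check collapses to the misreporting deviations that IC handles directly. Your tariff instead makes off-path points \emph{more} attractive than under the firm's own item, which is why you need the matching-item trick even off path.

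The gap you flag---upgrading from threshold deviations to arbitrary, possibly path-dependent stopping rules---is real, but it is the same fact for both constructions, and it is precisely what the paper asserts rather than proves: given any tariff, each type faces a Markovian stopping problem with one-dimensional state $X_t$, and with continuous paths a Markov stopping rule is the first exit time from the connected component of the continuation region containing the starting point, i.e., effectively a two-threshold rule; hence the supremum over all strategies is achieved within the threshold class. This is the content of the paper's ``as is easily checked'' together with the Markovian claim stated immediately after the theorem. You do not need the full Taxation Principle of the companion paper here (the paper invokes that machinery only for the separate corollary on identifiable mechanisms), but you do need to state and use this two-threshold sufficiency; note that it is more delicate for your tariff than for the paper's, since the infimum can dip below the firm's own item on a large set of points, while the ceiling tariff dips only on the designated threshold set. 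To make your proof self-contained, either switch to the ceiling construction or prove two-threshold sufficiency for the one-dimensional stopping problem with a Borel reward.
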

\begin{proof} Consider any direct liability mechanism $M$ and let $\lbar x_\theta = \lbar x^{\psi_\theta}_\theta$ and $\psi_\theta = \psi_\theta(\lbar x_\theta)$ denote the firm's launch threshold and penalty in case of damage that are implemented under mechanism $M$ when the firm has type $\theta$.

We introduce a ceiling mechanism $\tilde M$ as follows: for each $\theta$, $\tilde \psi_\theta$ gives the maximal penalty $l$ for all $x$ except at $\lbar x_\theta$, where it gives $\psi_\theta$. The ceiling mechanism $\tilde M$ is IC and implements the same thresholds $\lbar x_\theta$, because under $M$ the firm faces the penalty only when it launches the product and higher penalties at other levels can only reduce the incentive to deviate.

If $M$ prescribes the same threshold $\lbar x$ to types $\theta\neq \theta'$, the penalties $\psi_\theta$ and $\psi_\theta'$ must be identical. Otherwise, one type would want to misreport its type and $M$ would not be incentive compatible.

We define the tariff $\psi$ as follows:
\[\psi(\lbar x_\theta) = \psi_\theta\]
for all $\theta \in \Theta$ and
\[\psi (x) = l\]
otherwise.

This tariff is independent of the firm's private information. Moreover, it implements the same launch thresholds as $M$, as is easily checked.\hfill$\blacksquare$
\end{proof}
Theorem~\ref{thm:posted_liability} shows that any admissible liability mechanism can be implemented by a tariff. From now on, we invoke Theorem~\ref{thm:posted_liability} and focus without loss of generality on admissible mechanisms that are implemented by tariffs, hereafter ``admissible tariffs''.

Given any admissible tariff $\psi: x \mapsto \psi(x)$, each type $\theta$ faces a Markovian decision problem in which the state variable at time $t$ is $X_t$. Therefore, there exist thresholds $\lbar x^\psi_\theta\leq \bar{x}^\psi_{\theta}$ such that type $\theta$ stops acquiring information when the process $X$ leaves the interval $(\lbar{x}^\psi_\theta, \bar{x}^\psi_{\theta})$, launches the product at $\lbar x^\psi_\theta$ and abandons it at $\bar x^\psi_\theta$.

Our next result establishes a single-crossing property for the firm.
\begin{lemma}\label{lemma-single-crossing} Consider any admissible tariff $\psi$, level $x$, and type $\theta\in \Theta$. If $\theta$ prefers acquiring information at $x$ to immediately launching the product at $x$, then so does any type $\theta' \geq \theta$.\end{lemma}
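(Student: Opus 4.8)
The plan is to reduce the cross-type comparison to a single linear inequality in the posterior belief, exploiting that, conditional on the true state $y$, the law of the evidence process does not depend on the prior $\theta$. First I would record the sufficient-statistic structure already noted in the model: since the likelihood ratio attached to a path depends only on the terminal value, the posterior $p(x,\theta)=\Pr(y=1\mid x,\theta)$ is a well-defined function of $(x,\theta)$ that is strictly increasing in $\theta$ for each fixed $x$. Fix $x$ and write $p=p(x,\theta)$ and $p'=p(x,\theta')$, so that $\theta'\geq\theta$ gives $p'\geq p$.

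Next, for any continuation strategy $S=(\tau,d)$ with $\tau>t$ (the firm acquires at least an instant of information and then acts), I would decompose its value to the firm by conditioning on $y$: the expected payoff of $S$ to a type-$\theta$ firm at $x$ equals $pV_1(S)+(1-p)V_0(S)$, where $V_y(S)$ is the expectation of $d(\pi-y\psi(X_\tau))-c(\tau-t)$ under the law of $X$ given the state $y$. The crucial point is that $V_y(S)$ does not depend on $\theta$: the prior enters only through the weights $p,1-p$, never through the conditional dynamics of $X$. Letting $S_L$ denote ``launch immediately,'' so that $V_1(S_L)=\pi-\psi(x)$ and $V_0(S_L)=\pi$, and setting $A_y:=V_y(S)-V_y(S_L)$, the firm prefers $S$ to launching at $x$ if and only if $pA_1+(1-p)A_0\geq 0$.

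The argument then rests on two sign facts. First, $A_0\leq 0$ for every continuation strategy: conditional on the product being safe, launching immediately secures $\pi$, whereas any delay incurs the cost $cE^0[\tau-t]>0$ and, with positive probability, ends in abandoning a profitable product, so $V_0(S)\leq \pi-cE^0[\tau-t]\leq \pi=V_0(S_L)$. Second, the hypothesis that $\theta$ prefers acquiring information supplies a strategy $S^*$ (for instance $\theta$'s optimal continuation policy) with $pA_1(S^*)+(1-p)A_0(S^*)\geq 0$; since $A_0(S^*)\leq 0$ and $p>0$, this forces $A_1(S^*)\geq 0$, whence $A_1(S^*)\geq 0\geq A_0(S^*)$. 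Finally I transfer to $\theta'$ by reusing $S^*$: by linearity in the belief, the payoff difference $pA_1(S^*)+(1-p)A_0(S^*)$ evaluated at $p'$ exceeds its value at $p$ by $(p'-p)\,(A_1(S^*)-A_0(S^*))\geq 0$, so $\theta'$ obtains a weakly higher payoff from $S^*$ than from launching, and \emph{a fortiori} from its own optimal continuation policy. Hence $\theta'$ also prefers acquiring information (the degenerate case $\theta=0$, where the firm is certain the product is safe and always launches, is vacuous).

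I expect the main obstacle to be conceptual rather than computational. The tempting ``economic'' claim that continuing is always better in the risky state, i.e. $A_1\geq 0$, is simply false for an arbitrary tariff and strategy, so it cannot be established in isolation; the insight is that it need not be, because the hypothesis itself delivers $A_1(S^*)\geq 0$ once the unconditional fact $A_0\leq 0$ is in hand. The two places where care is genuinely needed are therefore establishing $A_0\leq 0$ cleanly (that, in the safe state, delay is unambiguously costly no matter what the tariff does elsewhere) and verifying the $\theta$-independence of the conditional values $V_y$; everything else is the linear-in-$p$ transfer.
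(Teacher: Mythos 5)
Your proof is correct and follows essentially the same route as the paper's: the paper also writes the continuation-versus-launch comparison as a convex combination $p\,a + (1-p)\,b \geq \pi$ (your $pA_1 + (1-p)A_0 \geq 0$ with $a = A_1 + \pi$, $b = A_0 + \pi$), uses the fact that the good-state continuation value is below $\pi$ (your $A_0 \leq 0$) to force the bad-state term above $\pi$ (your $A_1 \geq 0$), and concludes by monotonicity in $p$. The only cosmetic difference is that the paper instantiates the argument on the optimal threshold strategy with explicit hitting probabilities and times $(f^g,f^b,T^g,T^b)$, whereas you phrase it for an arbitrary continuation strategy $S^*$, which changes nothing essential.
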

\begin{proof} We fix a tariff function $\psi$ and a level $x$, and suppose that $X_t = x$ at some time $t$ that we normalize to 0 for simplicity. Suppose that some type $\theta$ prefers the strategy that consists in launching the product at $\lbar x < x$ and abandoning it at $\bar x > x$, and let $p = \Pr(y = 1 | \theta)$.

If $\theta$ launches the product at $x$, it gets:
\be\label{eq-stop} \pi - p \psi(x).\ee
Let $T^g,f^g$ denote the expected hitting time and the probability of hitting $\lbar x$ if $y = 0$ (the product is good), and $T^b$ and $f^b$ be defined similarly if $y =1$ (the product is damaged). If $\theta$ continues until hitting $\lbar x$ or $\bar x$, its expected payoff is
\be\label{eq-continue}p(f^b(\pi - \psi(\lbar x)) - c T^b) + (1-p) (f^g \times \pi - c T^g).\ee
Comparing~\eqref{eq-stop} and~\eqref{eq-continue}, continuing is optimal if
\be\label{eq-compare}p (f^b(\pi-\psi(\lbar x)) + \psi(x) - c T^b) + (1-p) (f^g \pi- c T^g) \geq \pi.\ee
The left-hand side is a convex combination of two terms: $a = f^b(\pi-\psi(\lbar x)) + \psi(x)- c T^b$ and $b = f^g \pi - c T^g$.
The second term, $b$ is less than $\pi$, because $f^g$ is a probability. Therefore,~\eqref{eq-compare} can hold only if the first term, $a$, is greater than $\pi$.

Rewriting~\eqref{eq-compare}, a firm that assigns probability $p$ to $y=1$ wishes to continue if
\[p(a - b) \geq \pi - b.\]
Since $a > b$, the coefficient of $p$ is strictly positive. This implies that any type that assigns probability $p'> p$ to $y=1$ also prefers the continuation strategy to launching the product immediately at $x$.\hfill$\blacksquare$
\end{proof}

Lemma~\ref{lemma-single-crossing} has the following intuition: If a firm knew that the product were safe, it would optimally launch the product immediately. The return to acquiring more evidence is negative in this case. Given any liability function, if a type wants to acquire more evidence it must be that doing so has a positive return conditional on the product being unsafe. The expected return from acquiring more evidence is thus increasing in the probability that the firm assigns to the product being faulty.

Lemma~\ref{lemma-single-crossing} immediately implies the following monotonicity result:
\begin{proposition}\label{prop-mono-thresholds} For any admissible tariff $\psi$, the launch thresholds $x^{\psi}(\theta)$ are decreasing in~$\theta$.\end{proposition}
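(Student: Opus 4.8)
The plan is to derive Proposition~\ref{prop-mono-thresholds} as a direct consequence of the single-crossing property in Lemma~\ref{lemma-single-crossing}, with essentially no further computation. The idea is that the launch threshold $x^\psi(\theta) = \lbar x^\psi_\theta$ is exactly the upper boundary of the set of evidence levels at which type $\theta$ weakly prefers launching the product immediately to acquiring more information, and single-crossing forces this set to shrink as $\theta$ increases.

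First I would make the threshold characterization precise. By admissibility, each type $\theta$'s strategy is governed by cutoffs $\lbar x^\psi_\theta \le \bar x^\psi_\theta$: type $\theta$ launches if $X_t \le \lbar x^\psi_\theta$, acquires information on $(\lbar x^\psi_\theta, \bar x^\psi_\theta)$, and abandons if $X_t \ge \bar x^\psi_\theta$. Define the launch region $\mc L_\theta = \{x \in \R : \text{type } \theta \text{ weakly prefers launching at } x \text{ to acquiring information}\}$. Since the problem is Markovian in the state $X_t$ and the strategy is of threshold form, $\mc L_\theta$ is the lower interval $(-\infty, \lbar x^\psi_\theta]$, so that $\lbar x^\psi_\theta = \sup \mc L_\theta$.

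Next I would invoke Lemma~\ref{lemma-single-crossing} in contrapositive form. The lemma states that if type $\theta$ prefers acquiring information at $x$ to launching, then so does every $\theta' \ge \theta$; equivalently, if $\theta'$ weakly prefers launching at $x$, then so does $\theta$. This yields the nesting $\mc L_{\theta'} \subseteq \mc L_\theta$ for all $\theta' \ge \theta$. Taking suprema of these lower intervals gives $\lbar x^\psi_{\theta'} = \sup \mc L_{\theta'} \le \sup \mc L_\theta = \lbar x^\psi_\theta$, i.e. $x^\psi(\theta') \le x^\psi(\theta)$, which is the asserted monotonicity.

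The only delicate point is the bookkeeping behind the phrase ``prefers acquiring information'': Lemma~\ref{lemma-single-crossing} compares launching immediately against a \emph{specific} threshold continuation strategy, so I would confirm that a type's \emph{optimal} continuation value — the supremum over all continuation strategies — inherits the same single-crossing comparison. This is immediate here: if the optimal continuation value of type $\theta$ exceeds its launch value at $x$, it is attained by some admissible pair of cutoffs, and the lemma's argument shows that type $\theta'$ does at least as well with those same cutoffs, hence also strictly prefers to continue; the converse direction used in the nesting follows by contraposition. No separate treatment of the abandonment threshold $\bar x^\psi_\theta$ is required, since the launch threshold is pinned down solely by the launch-versus-continue comparison, which is precisely the comparison governed by Lemma~\ref{lemma-single-crossing}.
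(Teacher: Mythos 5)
Your proposal is correct and takes essentially the same approach as the paper: the paper gives no separate argument, asserting only that Lemma~\ref{lemma-single-crossing} ``immediately implies'' the monotonicity, and your nesting-of-launch-regions derivation (launch regions are lower sets that shrink in $\theta$, so their suprema are decreasing) is precisely that implication spelled out. The only bookkeeping caveat is that your contrapositive step quietly uses the strict version of the single-crossing comparison (if $\theta'$ \emph{weakly} prefers launching then so does $\theta$), which the lemma's proof does deliver since $a>b$ and the posterior is strictly increasing in the prior, so this is a detail rather than a gap.
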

This monotonicity result crucially hinges on the fact that the regulator can only charge the firm if it causes some damage. The following example\footnote{This example is partially inspired by the approval mechanisms in \cite{Mcclellan2019} and \cite{HenryOttaviani2019}.} shows that if the regulator can charge the firm even when the product causes no damage, the launch thresholds increase with the type of the firm.
\subsection*{Example: Monotonicity Violation with Damage-Independent Fee}
Suppose that the assumptions of our model are maintained with one exception: if the firm launches its product, the regulator charges the firm a fee $\eta(x)\leq l$ that depends on the evidence $x$ demonstrated when the product is launched, independently of any damage subsequently caused by the product.

We assume that $l>\pi$, so that the regulator can deter the firm from launching the product at any given $x$ by setting $\eta(x) = l$. An admissible direct revelation mechanism specifies, for each type, a launch threshold $\lbar{x}_\theta$ and a fee $\eta_\theta = \eta(\lbar x_\theta)$. Without loss of generality we assume that $\eta(x) = l$ for all $x\notin \{\lbar x_\theta\}$. 

For the sake of this example, we assume for simplicity that the firm perfectly knows its product's riskiness, which means that there are two types of firms: \emph{bad} firms with prior $\theta = 0$ and good firms with prior $\theta=1$.

We construct an IC mechanism for which $\lbar{x}_0 < \lbar x_1$. We start by setting a launch threshold $\lbar x_1<0$ for the bad firm and choose $\eta_1 = \eta(\lbar x_1)$ low enough that (i) a bad firm forced to launch the product at $\lbar x_1$ abandons the product at a threshold $\bar{x}_1>0$ such that $\bar x_1 > |\lbar{x}_1|$ and (ii) this strategy yields a strictly positive expected payoff to the bad firm. Such a construction is always possible by choosing $\lbar x_1$ close enough to 0.

Next, we fix some launch threshold $\lbar{x}_0\in (2 \lbar x_1, \lbar x_1)$ for the good firm and choose $\eta_0 = \eta(\lbar x_0)$ so that a good firm is indifferent between launching the product at $\lbar x_0$ and at $\lbar x_1$. Such a construction is always possible by choosing $\lbar x_0$ close enough to $\lbar x_1$ and $\eta_1$ slightly lower than $\eta_0$. By construction, a good firm is indifferent between the two items of menu $\{(\lbar x_0,\eta_0)$ and $(\lbar x_1,\eta_1)\}$.

To demonstrate incentive compatibility, there remains to show that a bad firm prefers the second item on this menu. Let $\bar x^d$ denote the optimal abandonment threshold of the bad firm if it launches its product at $\lbar x_0$. Suppose first that $\bar{x}^d\leq 0$, This means that the firm prefers to abandon immediately, starting from $X_0=0$. This yields an expected payoff of zero and is dominated by the item $(\lbar x_1, \eta_1)$. Now suppose that $\bar{x}^d>0$. By construction, $\lbar x_1$ is closer to $\lbar x_0$ than it is to $\bar x^d$. Lemma~\ref{lemma-mimic}, then implies that at state $\lbar x_1$, a good firm gets a strictly higher payoff than the bad firm by adopting the strategy $(\lbar x_0, \bar x^d)$. The good firm is by construction indifferent between the strategy $(\lbar{x}_0,\bar{x}_0)$ and stopping immediately at $\lbar x_1$, so the good firm weakly prefers to stop immediately to adopt the strategy $(\lbar{x}_0,\bar{x}^d)$. Moreover, both types get exactly the same payoff if they stop at $\lbar x_1$, since both firms pay the $\eta(x_1)$. Therefore, the strategy $(\lbar x_0,\bar x_d)$ must be strictly worse for the bad firm than stopping at $\lbar x_1$. This shows that the mechanism is incentive compatible for both types of firms, and that the risky firm ($\theta = 1$) launches the product with less evidence than the safe one ($\theta = 0$), since $\lbar{x}_0<\lbar{x}_1$.
\begin{lemma}\label{lemma-mimic}
Consider the strategy that consists in launching the product at $\lbar x$ and abandoning it at $\bar x > \lbar x$, and consider any $x\in (\lbar x, (\lbar x + \bar x)/2))$. If $\eta(\lbar x) <\pi$, the expected payoff from the strategy, starting from $X_0 = x$, is higher for the good firm than for the bad firm.\end{lemma}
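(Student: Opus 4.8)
The plan is to compute both firms' payoffs from the prescribed strategy as explicit functions of the starting point and then, using the symmetry of the model, reduce the comparison to two elementary sign facts about a single diffusion. Since the fee $\eta(\lbar x)$ is paid on launch regardless of damage, a firm whose product has type $y$ and who follows ``launch at $\lbar x$, abandon at $\bar x$'' starting from $X_0=x$ obtains
\[
V(x)=f(x)\,\big(\pi-\eta(\lbar x)\big)-c\,T(x),
\]
where $f(x)$ is the probability that $X$ exits $(\lbar x,\bar x)$ at the lower end and $T(x)=E_x[\tau]$ is the expected time before exit. Writing $f^g,T^g$ for the safe product ($y=0$, drift $-1$) and $f^b,T^b$ for the risky product ($y=1$, drift $+1$), and setting $m=(\lbar x+\bar x)/2$, the claim is $V^g(x)>V^b(x)$ on $(\lbar x,m)$.

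The key step is to exploit the symmetry between the two drifts. Reflecting the process about the midpoint, $X\mapsto 2m-X$, turns a risky-product path into a safe-product path, interchanges the two thresholds, and preserves the exit time. This yields the identities $f^b(x)=1-f^g(2m-x)$ and $T^b(x)=T^g(2m-x)$, so the whole comparison can be written through the single safe-product process:
\[
V^g(x)-V^b(x)=\big(\pi-\eta(\lbar x)\big)\,A(x)+c\,h(x),\qquad A(x)=f^g(x)+f^g(2m-x)-1,\quad h(x)=T^g(2m-x)-T^g(x).
\]

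It then remains to sign the two brackets on $(\lbar x,m)$. For $A$: the lower-exit probability $f^g$ of a downward-drifting diffusion is, from its scale-function formula, decreasing and concave, so $A'(x)=f^{g\prime}(x)-f^{g\prime}(2m-x)>0$ for $x<m$ (the argument $2m-x$ is larger and $f^{g\prime}$ is decreasing), while $A(\lbar x)=f^g(\lbar x)+f^g(\bar x)-1=1+0-1=0$; hence $A>0$ on $(\lbar x,m]$. For $h$: expressing the expected exit time in closed form and writing $u=m-x$, the function $u\mapsto h$ is concave (an affine term minus a convex hyperbolic-sine term) and vanishes at both $u=0$ (i.e.\ $x=m$) and $u=m-\lbar x$ (i.e.\ $x=\lbar x$, where $T^g(\bar x)=T^g(\lbar x)=0$), so $h\ge 0$ on $[\lbar x,m]$. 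Since $\pi-\eta(\lbar x)>0$ by hypothesis and $c>0$, both terms are nonnegative and the first is strictly positive, giving $V^g>V^b$.

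The main obstacle is the running-cost term $c\,h$. The probability comparison $A>0$ is intuitive—the safe product drifts toward the rewarding lower threshold—and in fact holds on all of $(\lbar x,\bar x)$ by the symmetry of $A$ about $m$; it is the running cost that pushes in the opposite direction and that makes the restriction $x<m$ necessary. The cleanest route through this is the reflection identity $T^b(x)=T^g(2m-x)$ together with the concavity of $h$ in $u=m-x$, which pins down its sign from its vanishing at the two endpoints. A direct attack through explicit hitting-probability and hitting-time formulas also works but is considerably messier and obscures why $x<m$ is exactly the right condition.
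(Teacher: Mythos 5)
Your proof is correct, and while it starts from the same decomposition as the paper---the payoff gap equals $[f^g-f^b]\,(\pi-\eta(\lbar x))+c\,[T^b-T^g]$, so everything reduces to the two signs $f^g>f^b$ and $T^b>T^g$---you establish those signs by a genuinely different and, as it turns out, sounder mechanism. The paper applies the Optional Sampling Theorem to $X$ for each type, sums the resulting identities to obtain $T^b-T^g=2(\bar x-x)-(f^g+f^b)(\bar x-\lbar x)$, and then tries to conclude from the midpoint condition together with $f^g+f^b>1$ (proved in a footnote by a symmetry argument kindred to your reflection); the inequality $f^g>f^b$ is asserted without proof. You instead transport the bad type into the good type's law via the reflection $x\mapsto 2m-x$, $m=(\lbar x+\bar x)/2$, yielding $f^b(x)=1-f^g(2m-x)$ and $T^b(x)=T^g(2m-x)$, and then sign the probability term by concavity of the scale-function exit probability and the time term by the affine-minus-$\sinh$ concavity argument. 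The difference matters, because the paper's route as written does not close: for $x<m$ one has $2(\bar x-x)>\bar x-\lbar x$ (the paper states the reverse inequality), and the lower bound $f^g+f^b>1$ only gives $T^b-T^g<2(\bar x-x)-(\bar x-\lbar x)$, a positive upper bound that says nothing about the sign. What the Wald-identity route actually needs is the opposite bound $f^g+f^b<2(\bar x-x)/(\bar x-\lbar x)$, which via your reflection identities is equivalent to the strict monotonicity of $t\mapsto\sinh(t)/t$---that is, precisely to the positivity of your $h$. (The paper's displays also attach drift $-1$ to the superscript $b$ and drift $+1$ to $g$, the opposite of what its final conclusion requires.) So your approach costs the explicit scale-function and exit-time formulas, but it buys a complete argument: it supplies the missing bound that the paper's summed identity requires, proves rather than asserts $f^g>f^b$, and makes transparent that the hypothesis $x<m$ is needed only for the running-cost term, since $f^g-f^b>0$ holds on all of $(\lbar x,\bar x)$.
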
\begin{proof}
We follow the notation used in the proof of Lemma~\ref{lemma-single-crossing}. Starting from $X_0 =x$, the Optional Sampling Theorem applied to the identity function $X_t\mapsto X_t$ and to type $\theta =1$ implies that \[E[X_\tau | x, \theta = 1] = x + E[\int_0^\tau -1 dt] = x - T^b.\] Expressing the expectation on the left-hand side in terms of hitting probability $f^b$ and rearranging yields:
\[(-1)\cdot T^b = (\lbar{x}-x)f^b + (\bar{x}- x) (1-f^b) = (\bar{x}- x) - f^b(\bar{x}- \lbar{x})\]
Proceeding similarly for type $\theta = 0$, we get:
\[T^g = (\lbar{x}-x)f^g + (\bar{x}- x) (1-f^g) = (\bar{x}- x) - f^g(\bar{x}- \lbar{x})\]
Summing the last two equations yields
\be\label{eq-diff-T}T^{g} - T^{b}=  2(\bar{x}- x) - (f^g + f^b) (\bar{x} - \lbar{x}).\ee
We have $x\in (\lbar x, (\lbar x + \bar x)/2))$, which implies that $2(\bar x - x) < \bar x - \lbar x$ and that $f^g+f^b > 1$.\footnote{For the latter inequality, notice that the drifts of $X_t$ are exact opposite for good and bad firms, so that $\frac{1}{2}(f^g+f^b)$ is the probability that the Brownian process $X_t$ with drift either 1 or -1 with equal probability hits $\lbar x$ before $\bar x$ when starting from $x$. Since $x$ is closer to $\lbar x$ than it is to $\bar x$, this probability is greater than $1/2$, which implies that $f^g+f^b >1$.}
Therefore~\eqref{eq-diff-T} is negative, which shows that $T^b > T^g$. The difference of the good and bad types' expected payoffs is given by:
\begin{equation*}
	(f^g (\pi - \eta)- c T^g) - (f^b (\pi - \eta(\lbar{x})) - c T^b) = \underbrace{[f^g - f^b]}_{>0} \underbrace{(\pi - \eta)}_{>0} +  c \underbrace{[T^b - T^g]}_{>0}.
\end{equation*}
which is strictly positive.\hfill$\blacksquare$\end{proof}

\section{Reducing Recklessness}

Proposition~\ref{pro-under}, shows that the regulator would like to implement lower thresholds than the firm when the firm faces with a uniform penalty, regardless of the firm's private information. The next proposition shows that under these circumstances, it is without loss of generality to focus on tariffs that are nondecreasing functions of $x$, i.e., which impose a lower penalty, the more due diligence is demonstrated by the firm.
\begin{proposition} Suppose that $\Theta$ is finite and consider any thresholds $\{x_\theta\}_{\theta\in \Theta}$ that are (i) decreasing in $\theta$ and (ii) such that $x_\theta\leq \lbar x^{l}_\theta$ for all $\theta\in \Theta$. Then, there exists a non-decreasing, piecewise-constant tariff $\psi$ such that $\lbar x^\psi_\theta = x_\theta$ for all $\theta\in \Theta$.\end{proposition}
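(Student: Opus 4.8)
The plan is to treat the statement as a pure implementation problem: since a tariff mechanism involves no report, I do not need to check any truth-telling constraint, only to exhibit a single non-decreasing step function $\psi$ for which the optimal launch threshold of each type equals the prescribed target. Index the finitely many types as $\theta_1<\dots<\theta_n$ and write $x_i$ for the target of $\theta_i$; by hypothesis (i) the targets are ordered $x_1>x_2>\dots>x_n$ (ties are merged, since by Theorem~\ref{thm:posted_liability} two types sharing a threshold must face the same penalty there). By Proposition~\ref{prop-mono-thresholds} any non-decreasing tariff already induces thresholds that are decreasing in $\theta$, so the monotone ordering of the $x_i$ is automatically consistent and I only have to hit the exact values. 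I posit $\psi\equiv l$ on $(x_1,\infty)$, $\psi\equiv b_i$ on $(x_{i+1},x_i]$ for $i=1,\dots,n-1$, and $\psi\equiv b_n$ on $(-\infty,x_n]$; then ``$\psi$ non-decreasing'' is exactly the requirement $l\ge b_1\ge\dots\ge b_n$, which I will obtain as an output of the construction rather than impose.

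The building block is a single-step observation. If the penalty equals $l$ above a level $s$ and some constant $b<l$ below it, then every type with $\lbar x^l_\theta\ge s$ launches exactly at $s$. Indeed, below $s$ the firm faces a stationary problem with constant penalty $b$, and lowering the penalty raises the launch payoff pointwise (the improvement $\pi-p_\theta(x)b$ over $\pi-p_\theta(x)l$ is nonnegative and increasing in $x$); running the inequality~\eqref{eq-u} from the proof of Proposition~\ref{pro-under} in reverse shows that this weakly raises the launch threshold, so the stationary-$b$ threshold is at least $\lbar x^l_\theta\ge s$. Hence the firm launches as soon as it enters the cheaper region, while above $s$ the mere prospect of the cheaper region strictly raises the continuation value, so it never launches there. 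This is precisely why hypothesis (ii), $x_\theta\le\lbar x^l_\theta$, is the right condition: it guarantees that a downward step located at $x_\theta$ can bind from above.

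The heart of the argument is to calibrate the drop sizes so that, despite all types facing the same tariff, each stops at its own step. I would do this by a recursion anchored at the bottom, choosing the drop at each step so that the marginal type is indifferent there. Concretely, at the step $x_i$ I pick $b_i$ so that the more patient neighbor $\theta_{i+1}$ is exactly indifferent, at $x_i$, between launching (paying $b_i$) and continuing to the cheaper region below. Lemma~\ref{lemma-single-crossing} then does the separation: since the willingness to continue at $x_i$ is increasing in the posterior $p_\theta(x_i)$, and hence in $\theta$, every less patient type $\theta_j$ with $j\le i$ strictly prefers to launch at $x_i$, while every more patient type strictly prefers to continue. Propagating this across steps, $\theta_i$ is carried past $x_1,\dots,x_{i-1}$ (at each it is the more patient type held indifferent, so it continues) and then stops at $x_i$ (where it is the less patient type and strictly launches); the bottom value $b_n$ is taken small enough that $\theta_n$'s stationary threshold lies above $x_n$, so $\theta_n$ stops at $x_n$. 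Because inducing a strictly lower threshold demands a strictly stronger incentive to delay, each calibrated drop is positive, which yields $l>b_1>\dots>b_n$ and hence a non-decreasing $\psi$.

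The main obstacle is the global verification as opposed to the pairwise one. The single-crossing logic controls adjacent types cleanly, but I must rule out that a less patient type over-delays by chasing the cumulative reward of several distant steps, and that no type stops strictly between two steps. For the former I would argue that the recursion is genuinely triangular from the bottom: the marginal type held indifferent at $x_i$ never optimally descends past $x_{i+1}$ (being strictly less patient than $\theta_{i+1}$, which was the one held indifferent at $x_{i+1}$), so its indifference depends only on the already-fixed $b_{i+1}$ and on $b_i$, and the same comparison shows that a strictly less patient type stops strictly higher, never chasing the deeper steps. For the latter I would invoke the interval structure of each type's continuation region, so that ``launches at $x_i$'' together with ``continues just above $x_i$'' already pins its threshold at exactly $x_i$. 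Existence of each calibrating $b_i$ follows from the continuity and monotonicity of the firm's value in the penalty levels—no drop leaves the marginal type stopping too high, a large drop makes it continue—so the intermediate value theorem applies; assembling these, and checking that the resulting $b_i$ keep every induced policy admissible (threshold-form), completes the proof.
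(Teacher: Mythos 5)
Your blueprint (a piecewise-constant tariff with ceiling $l$ above the highest target, drops located at the targets, indifference calibration, and Lemma~\ref{lemma-single-crossing} for separation) matches the paper's, but you reverse the direction of calibration, and that reversal creates a genuine gap. The paper builds the tariff \emph{top-down}: the drop at $x_{\theta_i}$ makes type $\theta_i$ indifferent between launching at its own target and at its \emph{best alternative among all levels above} $x_{\theta_i}$ --- an equation that is well-posed because the tariff above is already fixed and a launch strategy's value depends on the tariff only at its launch point. That choice does double duty: it makes the target weakly optimal for $\theta_i$ against \emph{every} level above it (interiors of constant regions and the ceiling region included), while single-crossing applied at the later, lower steps rules out descending. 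Your \emph{bottom-up} scheme calibrates $b_i$ so that the neighboring type $\theta_{i+1}$ is indifferent at $x_i$ between launching and continuing below, which controls behavior only \emph{at the steps}: nothing in your argument shows that type $\theta_i$ prefers to continue at points strictly between steps, or above $x_1$ (note that $\lbar x^{l}_{\theta_i} > x_1$ is entirely possible, so under the ceiling region the type may want to launch before ever reaching $x_1$). Your appeal to the interval structure does not close this: ``launches at $x_i$'' plus ``indifferent at $x_{i-1}$'' only brackets $\theta_i$'s threshold in $[x_i, x_{i-1}]$, and with indifference at $x_{i-1}$ the threshold could just as well be $x_{i-1}$ or an interior point of $(x_i,x_{i-1})$. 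The paper never faces this problem precisely because its indifference is taken against the global best alternative above the target.

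Second, your claim that the ordering $l > b_1 > \dots > b_n$ is an automatic ``output'' of the construction is false as stated, because it hinges on the anchor $b_n$, and your stated condition on it (``small enough that $\theta_n$'s stationary threshold lies above $x_n$'') is satisfied by \emph{every} $b_n \le l$, including $b_n = l$, since hypothesis (ii) gives $\lbar x^{b}_{\theta_n} \ge \lbar x^{l}_{\theta_n} \ge x_n$ for all $b \le l$. Take $n=2$, $b_2 = l$, and parameters with $x_1 < \lbar x^{l}_{\theta_2}$ (consistent with (i)--(ii)). Then at $x_1$ the continuation ``wait until $x_2$ and pay $l$'' is strictly worse for $\theta_2$ than launching immediately and paying $l$, because $x_1$ lies strictly below $\theta_2$'s stationary-$l$ threshold; so your indifference equation forces $b_1 > l$, which is infeasible and non-monotone. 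Hence the anchor must be chosen low enough relative to quantities ($b_{n-1},\dots,b_1$) that are determined only \emph{later} in your recursion --- exactly the circularity the bottom-up order was supposed to avoid --- and you give no argument resolving it. (A related unaddressed point: the continuation value in your calibration at $x_i$ depends on $b_i$ itself through the interior region $(x_{i+1}, x_i)$, unless you restrict the continuation to launch only at $x_{i+1}$ and then verify that restriction ex post.) The paper's top-down order sidesteps all of this: at each step the target lies weakly below the type's currently preferred threshold, so restoring indifference requires \emph{lowering} the tariff, and monotonicity is automatic step by step.
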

\begin{proof} We index the elements of $\Theta$ from the smallest $\theta_1$ to the largest $\theta_{|\Theta|}$ and construct the tariff $\psi$ by moving from large values of $x$ to lower ones. We start by setting $\psi(x) \equiv l$ for all $x\geq x_{\theta_1}$. At $x_{\theta_1}$, we lower the tariff to a level $\psi_1$ that makes $\theta_1$ exactly indifferent between launching the product at $x_{\theta_1}$ and at $\lbar x^l_{\theta_1}$. We keep $\psi$ constant at the level $\psi_1$ for $x\in (x_{\theta_2},x_{\theta_1}]$. Since a firm's launch threshold when it faces a constant tariff $\hat l$ is decreasing in $\hat l$, and since $\psi_1 < l$, we have
\[x_{\theta_1} < \lbar x^l_{\theta_1} \leq \lbar x^{\psi_1}_{\theta_1}\]
where $\lbar x^{\psi_1}_{\theta_1}$ is the launch threshold used by type $\theta_1$ when the tariff is constant and equal to $\psi_1$. This implies that type $\theta_1$ prefers threshold $x_{\theta_1}$ to any level $x\in (x_{\theta_2},x_{\theta_1})$.

At $x_{\theta_2}$, we lower the tariff $\psi$ to a level $\psi_2$ that makes type $\theta_2$ exactly indifferent between launching the product at $x_{\theta_2}$ and at its preferred level $\hat x_2$ among all $x > x_{\theta_2}$, given the tariff $\psi$ constructed so far. By the single-crossing property established in Lemma~\ref{lemma-single-crossing}, this implies that $\theta_1$ prefers $\hat x_2$ to any $x_{\theta_2}$ and, combined with the previous paragraph, that $\theta_1$ prefers $x_{\theta_1}$ to any $x\geq x_{\theta_2}$.

We set $\psi$ equal to $\psi_2$ for all $x\in (x_{\theta_3}, x_{\theta_2}]$. Since $x_{\theta_2} \leq \lbar x^l_{\theta_2} \leq \lbar x^{\psi_2}_{\theta_2}$, type $\theta_2$ prefers $x_{\theta_2}$ to any $x\in(x_{\theta_3},x_{\theta_2})$. Another application of Lemma~\ref{lemma-single-crossing} guarantees that type $\theta_1$ also prefers $x_{\theta_2}$ to any $x\in(x_{\theta_3},x_{\theta_2})$.

Proceeding iteratively, we then lower $\psi$ at $x_{\theta_3}$ to a level $\psi_3$ that makes type $\theta_3$ exactly indifferent between launching the product at $x_{\theta_3}$ and at its preferred level $\hat x_3>x_{\theta_3}$ given the tariff $\psi$ constructed so far. Repeated applications of Lemma~\ref{lemma-single-crossing} guarantee that types $\theta_1,\theta_2$ prefer their respective thresholds $x_{\theta_1}, x_{\theta_2}$ to $x_{\theta_3}$. We extend $\psi$ by setting it constant, equal to $\psi_3$ for all $x\in (x_{\theta_4}, x_{\theta_3}]$. The proof is completed by induction.\hfill$\blacksquare$
\end{proof}

\section{Taxation Principle with Identifiable Information Acquisition}

When an IC mechanism implements distinct thresholds for distinct types, the conclusion of Theorem~\ref{thm:posted_liability} is a corollary of the Taxation Principle with Non-Contractible Events of our companion paper (\cite{PoggiStrulovici2020a}).

According to that paper, a mechanism is {\em identifiable} if satisfies two conditions that we translate into the present setting. Let $A$ denote the set of all possible strategies by the firm. Each element of $A$ consists of a pair $(\tau,d)$, where $\tau$ is a stopping time adapted to the filtration of $X$ and $d$ is measurable with respect to $\mc F^X_\tau$. For any subset $A'$ of $A$, let $X(A')$ denote the set of observable outcomes by the regulator if the firm chooses an action $a\in A'$ and causes some damage.

\begin{definition}
An IC mechanism $M$ is identifiable if there exists a partition $\mc A = \{A_k\}_{k=1}^K$ of $A$ such that
\bit
\vspace{-.3cm}
\item[] (i) $X(A_k)\cap X(A_{k'}) =\emptyset$ for all $k\neq k'$.
\item[](ii) All types $\theta$ who choose an action in $A_k$ under the mechanism choose the same action of $A_k$.
\eit
\end{definition}

\begin{proposition}\label{pro-identifiable}If $M$ implements distinct launch thresholds for all types, then it is identifiable.\end{proposition}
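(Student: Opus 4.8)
The plan is to build the required partition directly from the hypothesis that the launch thresholds are distinct. Index the types by their equilibrium strategies $a_\theta = (\tau_\theta, d_\theta)$, which are the threshold strategies implemented by $M$. The key observation is that such a strategy launches the product (sets $d=1$) only when $X_\tau$ equals the type's launch threshold $\lbar x_\theta$; since damage can occur only upon launch, the sole observable outcome generated by $a_\theta$ when damage occurs is the single point $\lbar x_\theta$, i.e. $X(\{a_\theta\}) = \{\lbar x_\theta\}$. By hypothesis the points $\{\lbar x_\theta\}_{\theta\in\Theta}$ are pairwise distinct, and this is the only place the hypothesis will be used.

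First I would separate the thresholds on the line. Ordering the types by their thresholds, I would choose cut points strictly between consecutive thresholds and use them to partition $\R$ into intervals $\{B_k\}$, each containing exactly one threshold $\lbar x_{\theta_k}$; this is possible precisely because the thresholds are distinct. I would then define the partition $\mc A = \{A_k\}$ of the strategy set $A$ by pulling $\{B_k\}$ back through the observable-outcome map, so that the equilibrium strategy $a_{\theta_k}$ lands in the block $A_k$ whose interval $B_k$ contains its launch level $\lbar x_{\theta_k}$. Verifying condition (ii) is then immediate: distinct thresholds yield distinct equilibrium strategies, so each block $A_k$ contains the equilibrium action of at most one type, and hence all types whose equilibrium action lies in $A_k$ trivially choose the same action. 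For condition (i), the observable outcome realized in block $A_k$ under the mechanism is the singleton $\{\lbar x_{\theta_k}\} \subseteq B_k$, and since the intervals $B_k$ are pairwise disjoint, so are the realized outcome sets $X(A_k)$. Both requirements of identifiability then hold.

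The main obstacle is purely formal rather than conceptual. The set $A$ contains many non-equilibrium strategies whose launch evidence is random and may be spread across several intervals $B_k$, so assigning \emph{every} strategy to a single block while keeping the outcome sets $X(A_k)$ globally disjoint requires the careful allocation developed in the companion paper (\cite{PoggiStrulovici2020a}). The content relevant to conditions (i)–(ii), however, bites only on the equilibrium strategies, whose observable outcomes are the singletons $\{\lbar x_\theta\}$; the distinctness of thresholds is exactly what provides the cut points separating these singletons, and therefore exactly what leaves room to carry out the general construction. I would thus present the interval separation as the substantive step and invoke the companion paper's framework to complete the assignment of the remaining strategies.
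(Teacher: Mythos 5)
Your central idea coincides with the paper's: partition strategies according to which launch threshold they use, observe that each type's equilibrium action produces the singleton outcome set $\{\lbar x_\theta\}$, and derive condition (i) from distinctness of the thresholds and condition (ii) from the fact that each cell contains at most one type's equilibrium action. However, your handling of the non-equilibrium strategies is a genuine gap. You concede that strategies whose launch evidence spreads across several intervals cannot be cleanly assigned to a block, and you propose to ``invoke the companion paper's framework to complete the assignment of the remaining strategies.'' This is circular: identifiability---the existence of the partition---is the \emph{hypothesis} of Theorem~1 in \cite{PoggiStrulovici2020a}, not something that theorem constructs. The whole content of the present proposition is to exhibit the partition, so its completion cannot be outsourced to the result that consumes it. As written, your $\mc A$ is not a partition of $A$, and the proof is incomplete.

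The resolution is far simpler than the ``careful allocation'' you anticipate, and it is exactly what the paper does: let $A_\theta$ be the set of strategies that use launch threshold $\lbar x_\theta$, collect everything else into a single residual cell $A_0 = A\setminus(\cup_{\theta\in\Theta} A_\theta)$, and take $\mc A = \{A_0, A_\theta : \theta\in\Theta\}$. Distinctness of the thresholds makes the cells $A_\theta$ pairwise disjoint, so $\mc A$ is a genuine partition; condition (ii) holds because each cell contains at most one type's equilibrium action (no type chooses an action in $A_0$); and condition (i) holds among the cells $A_\theta$ because the sets $X(A_\theta)=\{\lbar x_\theta\}$ are distinct singletons. (The concern implicit in your last paragraph---that residual strategies may generate outcomes overlapping these singletons---is one the paper itself sidesteps by verifying condition (i) only across the threshold cells, which are the only cells containing equilibrium actions.) Note finally that your interval machinery of cut points and blocks $B_k$ buys nothing: since the relevant outcome sets are singletons, the distinct points $\lbar x_\theta$ already separate themselves, and no surrounding intervals are needed.
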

\begin{proof} For each $\theta$, let $A_\theta$ denote the set of firm strategies that use launch threshold $\lbar x_\theta$, and let $A_0 = A \setminus (\cup_{\theta\in \Theta} A_\theta)$. By assumption on $M$, $\lbar x_\theta \neq\lbar x_{\theta'}$  for all $\theta\neq \theta'$. Therefore, $A_\theta$ and $A_\theta'$ are disjoint for all $\theta\neq \theta'$ and $\mc A = \{A_0, A_\theta: \theta\in \Theta\}$ forms a partition of $A$. Condition (ii) is trivially satisfied since for each cell of $\mc A$ there is at most one type taking action in that cell. Moreover Condition (i) is also satisfied by construction of the partition: $X(A_\theta) = \{\lbar x_\theta\}$ for all $\theta\in \Theta$ and, hence, $X(A_\theta)\cap X(A_{\theta'}) = \emptyset$ for all $\theta\neq \theta'$.\hfill$\blacksquare$\end{proof}

\begin{corollary} If an IC mechanism $M$ implements distinct launch threshold for all types, it can be implemented by a tariff mechanism.\end{corollary}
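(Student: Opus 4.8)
The plan is to obtain the corollary directly by chaining two results: Proposition~\ref{pro-identifiable}, already established in the excerpt, and the Taxation Principle with Non-Contractible Events of the companion paper \cite{PoggiStrulovici2020a}. The logical skeleton is: distinct launch thresholds $\Rightarrow$ identifiability $\Rightarrow$ tariff-implementability. Since the first implication is exactly the content of Proposition~\ref{pro-identifiable} and the second is the abstract Taxation Principle, the corollary requires essentially no new computation; the work is conceptual, in lining up the hypotheses of the abstract principle with the objects of the present model.

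First I would invoke Proposition~\ref{pro-identifiable}. Because $M$ is an IC mechanism that implements distinct launch thresholds for all types, that proposition exhibits a partition $\mc A = \{A_0, A_\theta : \theta \in \Theta\}$ of the strategy set $A$ satisfying the two identifiability conditions: $A_\theta$ collects the strategies using launch threshold $\lbar x_\theta$, each cell hosts at most one type, and the images $X(A_\theta) = \{\lbar x_\theta\}$ are pairwise disjoint singletons. Thus $M$ is identifiable in the sense defined immediately above the proposition.

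Next I would apply the Taxation Principle of \cite{PoggiStrulovici2020a}, which asserts that any identifiable IC mechanism is outcome-equivalent to one in which transfers depend only on the realized non-contractible outcome and not on any report of the firm's type. Translated into the present vocabulary, the non-contractible event is the occurrence of damage, the observable outcome conditional on that event is the evidence $X_\tau$, and a transfer schedule measurable with respect to $X_\tau$ alone is precisely a tariff $\psi : \R \to \R$ bounded above by $l$. Combining the two invocations yields that $M$ is outcome-equivalent to a tariff mechanism, which is the desired conclusion.

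The main obstacle is not algebraic but one of translation: one must verify that the abstract objects in \cite{PoggiStrulovici2020a} map correctly onto this setting, so that the identifiability conditions checked in Proposition~\ref{pro-identifiable} are exactly the hypotheses the abstract principle consumes, and so that the transfer rule it delivers coincides with a tariff in the sense of the tariff-mechanism definition. Once this dictionary is fixed, the conclusion transfers without further argument; it is worth noting that this gives a second, more conceptual route to the special case of Theorem~\ref{thm:posted_liability} in which all implemented thresholds are distinct.
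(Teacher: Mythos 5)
Your proposal is correct and follows exactly the paper's own argument: invoke Proposition~\ref{pro-identifiable} to establish identifiability, then apply Theorem~1 of \cite{PoggiStrulovici2020a} to conclude tariff-implementability. The additional discussion of mapping the abstract objects onto this setting is a reasonable elaboration but does not change the substance of the proof.
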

\begin{proof} Proposition~\ref{pro-identifiable} implies that $M$ is identifiable. The result then immediately follows from Theorem~1 in \cite{PoggiStrulovici2020a}\hfill$\blacksquare$\end{proof}

\pagebreak

\bibliographystyle{apalike}
\bibliography{LD_20Dec08_3.bib}

\end{document}